\documentclass[12pt]{article}

\pdfpagewidth 8.5in
\pdfpageheight 11in

\setlength{\topmargin}{-1cm} \setlength{\oddsidemargin}{0.8cm}
\textwidth=15cm \textheight=23cm

\usepackage{amsmath}
\usepackage[english]{babel}
\usepackage[ansinew]{inputenc}
\usepackage{epsfig}
\usepackage{color,graphicx}
\usepackage{wrapfig}
\usepackage{amssymb,amsthm,amsmath,booktabs}
\usepackage{dsfont}
\usepackage{natbib}
\usepackage{abstract}
\usepackage{ragged2e}

\usepackage{enumerate}


\renewcommand{\baselinestretch}{1}

\newtheorem{lemma}{Lemma}

\newtheorem*{definition*}{Definition}
\newtheorem*{thm*}{Theorem}
\newtheorem*{proposition*}{Proposition}

\newcommand{\bfnu}{\mbox{\boldmath $\nu$}}
\newcommand{\bftheta}{\mbox{\boldmath $\theta$}}
\newcommand{\bfTheta}{\mbox{\boldmath $\Theta$}}

\newcommand{\bbeta}{\mbox{\boldmath $\beta$}}
\newcommand{\bgamma}{\mbox{\boldmath $\gamma$}}
\newcommand{\bSigma}{\mbox{\boldmath $\Sigma$}}
\newcommand{\bmu}{\mbox{\boldmath $\mu$}}
\newcommand{\brho}{\mbox{\boldmath $\rho$}}
\newcommand{\p}{\mathbf{p}}
\newcommand{\X}{\mathbf{X}}
\newcommand{\y}{\mathbf{y}}
\newcommand{\Y}{\mathbf{Y}}
\newcommand{\U}{\mathbf{U}}
\renewcommand{\u}{\mathbf{u}}
\newcommand{\Z}{\mathbf{Z}}
\newcommand{\z}{\mathbf{z}}
\newcommand{\I}{\mathbf{I}}

\def\ds{\displaystyle}

\begin{document}

\title{Robust Bayesian model selection for heavy-tailed linear regression using finite mixtures}

\author{\bf{F. B. Gon\c{c}alves$^{a}$, M. O. Prates$^a$, V. H. Lachos$^b$}}

\maketitle

\begin{center}
{\footnotesize $^a$ Departamento de Estat\'{\i}stica, Universidade Federal de Minas Gerais, Brazil\\
$^b$ Departamento de Estat\'{\i}stica, Universidade Estadual de Campinas, Brazil}
\end{center}

\footnotetext[1]{Address: Av. Ant\^{o}nio Carlos, 6627 - DEST/ICEx/UFMG - Belo Horizonte, Minas Gerais, 31270-901, Brazil. E-mail: fbgoncalves@est.ufmg.br}

\maketitle

\begin{abstract}
In this paper we present a novel methodology to perform Bayesian
model selection in linear models with heavy-tailed distributions.
We consider a finite mixture of distributions to model
a latent variable where each component of the mixture corresponds to
one possible model within the symmetrical class of normal
independent distributions. Naturally, the Gaussian model is one of
the possibilities. This allows for a simultaneous analysis based on the
posterior probability of each model. Inference is performed via
Markov chain Monte Carlo - a Gibbs sampler with Metropolis-Hastings
steps for a class of parameters. Simulated examples highlight the
advantages of this approach compared to a segregated analysis based
on arbitrarily chosen model selection criteria. Examples with real data are
presented and an extension to censored linear regression is introduced
and discussed.
\end{abstract}

Keywords: Scale mixtures of normal; t-student; Slash; Penalised complexity priors; MCMC.

\section{Introduction}\label{secint}

Statistical practitioners generally use model selection
criteria in order to select a best model in different
applications. However, model selection has been shown not
to be an easy task and each criterion performs better under
different situations. For more complex models, it is not clear which
criterion is preferable \citep{Carlin:2006, Chen:2006, Gelman}.
Recently, \citet{Gelman} studied and
compared different model criteria and concluded that ``The current
state of the art of measurement of predictive model fit remains
unsatisfying''.
From their study it is clear that different criteria (AIC, DIC, WAIC)
fail in selecting the most adequate model under a variety of
circumstances. For example, settings with strong prior information or
when the posterior distribution is not well summarized by its mean
or in a spatial or network setup \citep[for more details, see][]{Gelman}.
Other authors also comment about the model selection
problem, e.g.,
``In summary, model choice is to Bayesians what multiple comparisons
is to frequentists: a really hard problem for which there exist several
potential solutions, but no consensus choice''\citep{Carlin:2006},
``we saw that no single measure is dominant in all three cases. The L
measure performed better when the true model becomes more complex
and BIC performed better when the true model is more parsimonious.''\citep{Chen:2006}.

Under the Bayesian paradigm, a more robust and elegant solution is available, at least in theory, by considering one ``full model" that
embeds all the individual models of interest. More specifically, this means that a multinomial r.v. with each category corresponding to
one of the individual models is specified. This way, model selection may be performed based on the posterior distribution of this r.v.
i.e., the posterior probability of each model. Nevertheless, this approach may be challenging in some cases, specially when the individual
models have different dimensions and distinct parameter. Available solutions may need to rely on complicated reversible jump MCMC algorithms
and, therefore, other model criteria may be preferred.

A simple and generally efficient solution may be obtained when a mixture distribution can be adopted for one of the model's component (parameter or latent variable) in a way that each mixture component corresponds to one of the individual models \citep[see, for example,][]{flavio,G&M}. This would typically lead to a simple and efficient solution, allowing for model selection to be based on the models' posterior probability.

We consider a model selection problem concerning the specification of the error distribution in linear regression models.
In particular, we consider different heavy-tailed distributions and the traditional Gaussian specification.
Existing solutions use model selection criteria arbitrarily chosen \citep[see][]{lachos, basso2010robust,
cabral2012multivariate} and, therefore, motivates the development of a more robust methodology.

The distributions of random errors and other random variables are routinely assumed to be
Gaussian. However, the normality assumption is doubtful and lacks
robustness especially when the data contain outliers or show a
significant violation of normality. Thus, previous works have shown
the importance of considering more general structures than the
Gaussian distribution for this component such as heavy-tailed
distributions \citep{Fern:1999,Galea2003,Rosa:2003,Galea:2005,garay2015}. These structures provide appealing robust and adaptable models,
for example, the Student-t linear mixed model presented by
\cite{pinheiro2001efficient}, who showed that it performed well
in the presence of outliers. Furthermore, the scale mixtures of
normal (SMN) distributions have also been applied into a wide
variety of regression models \citep[see][]{Lange93,
osorio2007assessment, lachos2011estimation}. It is one of the
most important subclasses of the elliptical symmetric distributions.
The SMN distribution class contains many heavier-than-normal tailed
members, such as Student-t, Slash, power exponential, and
contaminated normal. Recently, \cite{lin2013estimation} \citep[see
also][]{ lachos2011estimation} investigated the inference of a
measurement error model under the SMN distributions and demonstrated
its robustness against outliers through extensive simulations.

As defined by \citet{Andrews74}, a continuous random variable $Y$
has a SMN distribution if it can be expressed as follows
\begin{equation}
Y=\mu+\kappa^{1/2}(U)W,\nonumber
\end{equation}
where $\mu$ is a location parameter,  $W$ is a normal random
variable with zero mean and variance $\sigma^2$,  $\kappa(U)$ is a
positive weight function, $U$ is a mixing positive random variable
with density $h(.\mid \bfnu)$ and $\bfnu$  is a scalar or parameter
vector indexing the distribution of $U$. As in \citet{Lange93} and
\citet{Choy08},  we restrict our attention to the case
where $\kappa(U)=1/U$, that is, the normal independent (NI) class
of distributions. Thus, $Y\mid U=u\sim
\mathcal{N}(\mu,u^{-1}\sigma^2)$ and the marginal pdf of $Y$ is given by
\begin{equation}
f(y\mid\mu,\sigma^2,\bfnu)=\int^{\infty}_0 {\phi((y-\mu)/\sqrt{u^{-1} \sigma^2})}h(u\mid\bfnu)du.\label{SMNdef}
\end{equation}
Note that when $U=1$, we retrieve the normal distribution.
Following the steps of \citet{basso2010robust}, we have the following properties for the SMN family
\begin{itemize}
 \item[a)] If $E[\kappa^{1/2}(U)] < \infty$, then $E[Y] = \mu$, \\
 \item[b)] If $E[\kappa(U)] < \infty$, then $Var[Y] = \sigma^2 k_2$,\\
 \item[c)] If $E[\kappa^2(U)] < \infty$, then the excess kurtosis coefficient is given by
                $$\gamma_2 = \frac{E[Y - E[Y]]^4}{(Var[Y])^{2}}-3 = \frac{3k_4}{k_2^2 } - 3, $$
where $k_{m} = E[\kappa^{m/2}(U)]$.
\end{itemize}

Apart from the normal model, we explore two different types of
heavy-tailed densities based on the choice of $h(.\mid\bfnu)$.
\begin{itemize}
\item[$\bullet$]{\it   The Student-t distribution}, $Y\sim
\mathcal{T}(\mu,\sigma^2,\nu_t)$.\\
The use of the Student-t distribution as an alternative robust model
to the normal distribution has frequently been suggested in the
literature \citep{Lange89}. For the Student-t distribution with
location $\mu$, scale $\sigma$ and degrees of freedom $\nu_t$, the
pdf can be expressed as
\begin{equation}
f(y\mid \mu ,\sigma,\nu_t)=\int^{\infty}_{0} \phi((y-\mu)/\sqrt{u^{-1} \sigma^2}) f_{\mathcal{G}}( u \mid
\frac{\nu_t}{2},\frac{\nu_t}{2})du,\nonumber
\end{equation}
where $f_{\mathcal{G}}(. \mid a,b)$ is the Gamma density function with shape and rate parameters given by $a$ and $b$, respectively. That
is, $Y \sim \mathcal{T}_p(\mu,\sigma^2,\nu_t)$ is equivalent to the
following hierarchical form:
\begin{equation}
 Y\mid \mu,\sigma^{2}, \nu_t, u \sim \mathcal{N} \left( \mu,u^{-1}\sigma^{2}\right),\hspace{0.8 cm}\
 U \mid \nu \sim \mathcal{G}(\nu_t/2,\nu_t/2).\nonumber
\end{equation}
For the Student-t distribution we have that
$$k_{m} = \left(\frac{\nu_t}{2}\right)^{\frac{m}{2}} \frac{\Gamma(\frac{\nu_t-m}{2})}{\Gamma(\frac{\nu_t}{2})},$$
therefore, the Student-t has variance $\sigma^2\frac{\nu_t}{\nu_t-2}$, for $\nu_t>2$, and excess kurtosis $\frac{6}{\nu_t-4}$, for $\nu_t>4$.
\item[$\bullet$]{\it The Slash distribution,  $Y\sim
\mathcal{S}(\mu,\sigma^2,\nu_s)$.} \\ This distribution
presents heavier tails than those of the normal distribution and it
includes the normal case when $\nu_s \uparrow \infty$. Its pdf is
given by
\begin{equation}
f(y\mid \mu ,\sigma,\nu_s)=\nu_s\int^1_0{u^{\nu_s-1}\phi((y-\mu)/\sqrt{u^{-1} \sigma^2})}d u.\nonumber
\end{equation}
Thus, the Slash distribution is equivalent to the following
hierarchical form:
\begin{equation}
 Y \mid \mu,\sigma^{2}, \nu_s, u \sim \mathcal{N} \left( \mu,u^{-1}\sigma^{2}\right), \hspace{0.8 cm}U \mid \nu_s \sim \mathcal{B}(\nu_s,1),\nonumber\\ \nonumber\\ \nonumber
\end{equation}
where $\mathcal{B}(.,.)$ denotes the beta distribution.
For the Slash distribution we have that
$$k_{m} = \frac{2 \nu_s}{2 \nu_s - m},$$
therefore, the Slash has variance $\sigma^2\frac{\nu_s}{\nu_s-1}$, for $\nu_s>1$, and excess kurtosis
$\frac{3}{\nu_s(\nu_s-2)}$, for $\nu_s>2$.
\end{itemize}

The SMN formulation described above is used in a linear regression approach by taking $\mu= \X_i \bbeta$ where $\bbeta$ is the vector of
coefficients and $\X$ is the design matrix.

The aim of this paper is to propose a general formulation to perform Bayesian model
selection for heavy-tailed linear regression models in a simultaneous setup.
That is achieved by specifying a full model which includes the space of all individual models under consideration - specified using the SMN approach described above. This way, the model selection criterion can be based on the posterior probability of each model. A mixture distribution is adopted to one of the full model's variable, with each component of the mixture referring to one of the individual models.
This approach has two main advantages when compared to an ordinary analysis
where each model is fitted separately and some model selection criterion is used.
Firstly, there is a significant gain in the computational cost since we eliminate the need to fit all the individual models separately.
Secondly, the proposed model selection criterion is fully based on the Bayesian Paradigm, meaning that the model choice is based on
the posterior probability of each model. This is more robust when compared to some other
arbitrarily chosen model selection criteria such as DIC, EAIC, EBIC \citep{Spiegelhalter},
CPO \citep{Geisser} WAIC \citep{Watanabe}.
The examples presented in the paper are meant to provide empirical evidence for this argument.
The posterior distribution of the unknown quantities has a significant level of
complexity which motivates the derivation of a MCMC algorithm to obtain a sample
from this distribution.

This paper is organised as follows: Section~\ref{secmodel} presents the general model;
Section~\ref{secinference} presents a MCMC algorithm to
make inference for the proposed model; a variety of simulated examples are presented
in Section~\ref{secsimul} and the analysis of two real data sets is shown in
Section~\ref{secreal}. Finally, Section~\ref{secconc} discusses some extensions of
the proposed methodology.

\section{Linear regression model with heavy-tailed mixture structured errors}\label{secmodel}

Model selection is an important and complex problem in statistical
analysis and the Bayesian approach is particularly appealing to
solve it. In particular, the use of mixtures is a nice way to pose
and solve the problem, whenever possible. It allows for an analysis
where all models are considered and compared in a simultaneous setup
without the need of complicated reversible jump MCMC algorithms.
Note that, from (\ref{SMNdef}), each model is determined by the
distribution of the scale factor $u$, which suggests
that a mixture distribution could be used for this latent variable.
We present a general finite mixture model framework capable of capturing
different behavior of the response and indicate which individual
distribution is preferred.

\subsection{The model}
\label{s:model}

Define the $n$-dimensional response vector $\Y$, the $n\times q$
design matrix $\X$, the $q$-dimensional coefficient vector $\bbeta$
and two $K$-dimensional vectors
$\bgamma=(\gamma_1\;\ldots\;\gamma_K)'$ and
$\p=(p_1\;\ldots\;p_K)'$. Finally, let $diag(\u^{-1})$ be a
$n$-dimensional diagonal matrix with $i$-th diagonal $u_{i}^{-1}$,
$\;i=1,\ldots,n$. We propose the following general model:
\begin{eqnarray}
\ds (\Y|Z_j=1, \U = \u) &\sim& \mathcal{N}\left(\X \bbeta,\sigma^2\gamma_jdiag(\u^{-1})\right) \label{modeleq1}\\
   (U_i|Z_j=1)&\stackrel{iid}{\sim}&F_j(\nu_j), \;i=1,\ldots,n, \label{modeleq3} \\ 
   \Z&\sim& Mult(1,p_1,\ldots,p_K) \label{modeleq2} \\
   \gamma_j&=&g_j(\nu_j),\;j=1,\ldots,K, \label{modeleq4}
\end{eqnarray}
where $Mult$ is the Multinomial distribution and each $F_j$ represents a positive distribution controlled by parameter(s) $\nu_j$, which may need to be truncated to guarantee that $Y_i$ has finite variance under each $F_j$.

The particular structure chosen for the variance in (\ref{modeleq1}) was thought of
so that, for each $j$, the variance of the model is the same - $\sigma^2$. This is achieved through specific choices for the
functions $\gamma_j$ and  allows us to treat $\sigma^2$ as a common parameter to all of the individual models.
Otherwise we would need one scale parameter for each model. Therefore, in our approach,
since we have a common $\mu$ and $\sigma^2$, the models will mainly differ from each other in terms of tail behavior which will
favor the model selection procedure.

Note that each component from the mixture distribution of $u_i$ corresponds to one of the
models being considered. Model selection is made through the posterior distribution of $\Z$.
A subtle but important point here is the fact that there is no $i$ index for $Z_j$. This means that we assume that all the observations come from the same model, which poses the inference problem in the model selection framework.

Another advantage of the simultaneous approach is that it allows the use of Bayesian model averaging \citep[see][]{RMV}. This is particularly
useful in cases where more than one model have a significant posterior probability. Note that the models we consider can be quite similar in some situations - specially for higher values of the degrees of freedom (df) parameters.

\subsection{Prior distributions}

The Bayesian model is fully specified by (\ref{modeleq1})-(\ref{modeleq4}) and the prior
distribution for the parameter vector $\bftheta=(\bbeta,\sigma^2,\p,\bfnu)$, for
$\bfnu=(\nu_1,\ldots,\nu_K)$. Due to the complexity of the proposed model, the prior
distribution plays an important role on the model identifiability and selection process
and, for that reason, needs to be carefully specified.

Prior specification firstly assumes independence among all the components of $\bftheta$.
Secondly, standard priors $\bbeta\sim\mathcal{N}_q(\bmu_0,\tau_{0}^2\I_q)$ and
$\sigma^2\sim\mathcal{I}\mathcal{G}(a_0,b_0)$ are adopted.

The prior distributions of the tail behavior parameters $\bfnu$ require special attention.
This type of parameter is known to be hard to estimate \citep[see][]{steel} and the most
promising solutions found in the literature tackle the problem through special choices of
prior distributions \citep[see][]{fonseca}. Recently, \citet{rue} proposed a general family
of prior distributions for flexibility parameters which includes tail behavior parameters.

In this paper we adopt the penalised complexity priors (PC priors) from \citet{rue}.
In a simple way, the PC priors have as main principle to prefer a simpler model and penalise the more complex
one. To do so, the Kullback-Leibler divergence (KLD) \citep{kullback} is used to define a measure
of information loss when a simpler model $h$ is used to approximate a more flexible model
$f(\cdot|\nu_j)$. The measure $d(f||h)(\nu_j)= d(\nu_j) = \sqrt{2KLD(f||h)}$ is defined to be a measure of complexity
of model $f(\cdot|\nu_j)$ in comparison to $h$. Further, a density function $\pi(d(\nu_j)) = \lambda \exp(-\lambda d(\nu_j))$
is set for the measure $d(\nu_j)$. Finally, the prior distribution of $\nu_j$ is given by
$$\pi(\nu_j) = \lambda \exp(-\lambda d(\nu_j)) \left| \frac{\partial d(\nu_j)}{\partial \nu_j}\right|
\; j = 1, \ldots, K.$$
\citet{martins} showed that in a practical way, for the Student-t regression model, the PC prior
can behave very similar to the Jeffrey's priors constructed by \citet{fonseca}. Another
interesting practical usage of this prior is that the selection of an appropriate $\lambda$ is done
by allowing the researcher to control the prior tail behavior of the model. For example, for the Student-t
distribution the user must select $\nu^\star$ and $\xi$ such that $P(\nu_j < \nu^\star) = \xi$, in other words,
how much mass probability $\xi$ is assigned to $\nu_j \in (2,\nu^\star)$
(where $j$ defines the $F_j$ distribution such that the response follows a Student-t distribution).
Clearly, the same procedure applies for any other distribution in the NI family that has a flexibility
parameter. For more details on the PC priors see \citet{rue}.

The prior distribution for $\p$ also requires special attention. Note that even in the
extreme (unrealistic) case where $\Z$ is observed, it does not provide much information about $\p$,
in fact, it is equivalent to the information contained in a sample of size one from a $Mult(1,p_1,\ldots,p_K)$
distribution. The fact that $\Z$ is unknown aggravates the problem. A simple and practical way to
understand the consequences of this is given by the following lemma, which is a generalisation of Lemma 1
from \citet{flavio} where, to the best of our knowledge, this problem was firstly encountered.
\begin{lemma}\label{pprior}
For a prior distribution $\p\sim Dir(\alpha_1,\ldots,\alpha_K)$, the posterior mean of $p_j,\;\forall j$,
is restricted to the interval $\ds \left(\frac{\alpha_j}{1+\sum_{k=1}^K\alpha_k},\frac{\alpha_j+1}{1+\sum_{k=1}^K\alpha_k}\right)$.
\end{lemma}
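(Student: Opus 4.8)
The plan is to exploit the conjugacy between the Dirichlet prior and the single-trial Multinomial law of $\Z$, and then integrate out $\Z$ using the law of total expectation. First I would observe that, by the hierarchical structure (\ref{modeleq1})--(\ref{modeleq4}) together with the assumed prior independence of the components of $\bftheta$, the parameter $\p$ enters the data-generating mechanism only through $\Z$; hence $\p$ is conditionally independent of $\Y$ (and of all remaining parameters) given $\Z$. Consequently the conditional posterior of $\p$ given $\Z$ equals the prior updated by the single Multinomial observation, which by conjugacy is $Dir(\alpha_1+Z_1,\ldots,\alpha_K+Z_K)$.

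Next I would compute the conditional posterior mean. Since $\Z\sim Mult(1,\cdot)$ is a single trial, exactly one coordinate equals one, so $\sum_{k=1}^K Z_k=1$, and the Dirichlet mean gives
$$E[p_j\mid\Z]=\frac{\alpha_j+Z_j}{\sum_{k=1}^K(\alpha_k+Z_k)}=\frac{\alpha_j+Z_j}{1+\sum_{k=1}^K\alpha_k},$$
where the denominator is constant in $\Z$ precisely because the single-trial constraint adds exactly one unit in total. Applying the tower property and dropping the redundant conditioning on $\Y$ via the conditional independence established above, I obtain
$$E[p_j\mid\Y]=E\big[E[p_j\mid\Z,\Y]\mid\Y\big]=E\big[E[p_j\mid\Z]\mid\Y\big]=\frac{\alpha_j+P(Z_j=1\mid\Y)}{1+\sum_{k=1}^K\alpha_k},$$
using $E[Z_j\mid\Y]=P(Z_j=1\mid\Y)$ because $Z_j\in\{0,1\}$. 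Since the Dirichlet prior places positive mass on the interior of the simplex, each posterior model probability satisfies $0<P(Z_j=1\mid\Y)<1$; substituting the two extreme values then yields the claimed open interval.

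The main obstacle I anticipate is not the algebra but justifying the two structural facts cleanly: (i) that $\p\perp\Y\mid\Z$, which is what lets me replace $E[p_j\mid\Z,\Y]$ by $E[p_j\mid\Z]$, and (ii) that the single-trial identity $\sum_{k=1}^K Z_k=1$ forces the denominator to be the constant $1+\sum_{k=1}^K\alpha_k$ rather than depending on which component is active. Together these two facts collapse $E[p_j\mid\Y]$ into an affine function of $P(Z_j=1\mid\Y)$, confining it to an interval of width $1/(1+\sum_{k=1}^K\alpha_k)$ — exactly the quantitative statement of the weak identifiability of $\p$ noted before the lemma.
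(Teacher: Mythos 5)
Your proposal is correct and follows essentially the same route as the paper's own proof: both condition on $\Z$, use the Dirichlet--Multinomial conjugacy to get $(\p\mid Z_k=1)\sim Dir(\alpha_1+\mathds{1}\{k=1\},\ldots,\alpha_K+\mathds{1}\{k=K\})$, and then average the conditional means over the posterior of $\Z$ via the conditional independence $\p\perp\Y\mid\Z$. The only cosmetic difference is that you collapse the weighted average into the explicit affine form $\mathbb{E}[p_j\mid\Y]=\bigl(\alpha_j+P(Z_j=1\mid\Y)\bigr)/\bigl(1+\sum_k\alpha_k\bigr)$, whereas the paper simply bounds the convex combination by its extreme terms.
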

\begin{proof}
See Appendix~\ref{a:proof}.
\end{proof}
For example, if $\alpha_j=1,\;\forall j$, then $\mathbb{E}[p_j|y]\in(1/(K+1),2/(K+1))$.
This result indicates that the estimation of $\Z$ may be compromised by unreasonable choices of the $\alpha_j$'s.

A reasonable solution for this problem is to use a Dirichlet prior distribution with
parameters (much) smaller than 1, which makes it sparse. It is important, though, to
choose reasonable values for the $\alpha_j$'s, in the light of Lemma~\ref{pprior}.
\citet{flavio} claim that $\alpha_j=0.01,\;\forall j$ leads to good results and,
in the cases where prior information is available, some of the $\alpha_j$'s
may be increased accordingly.

\section{Bayesian Inference}\label{secinference}

We derive a MCMC algorithm considering the three most common
choices in the NI family - Normal, Student-t, Slash.
Nevertheless, based on the formulation presented in
Section~\ref{s:model}, including other possibilities is straightforward.
One should be careful, however, as it may lead to serious identifiability issues due to
similarities among the individual models. The model is given by:

\begin{eqnarray}
\ds  (\Y|Z_j=1) &\sim& \mathcal{N}\left(\X \bbeta,\sigma^2\gamma_jdiag(\u^{-1})\right) \label{modeleq1a}\\
   \Z&\sim& Mult(1,p_1,p_2,p_3) \label{modeleq2a} \\
   U_i&\stackrel{iid}{\sim}&\left\{
              \begin{array}{ll}
                \delta_1, & \mbox{if } Z_1=1 \\
                \mathcal{G}\left(\nu_t/2,\nu_t/2\right), & \mbox{if } Z_2=1, \; i = 1,\ldots,n, \\
                \mathcal{B}(\nu_s,1), & \mbox{if } Z_3=1
              \end{array}
            \right. \label{modeleq3a}
     \\
   \gamma_j&=&\left\{
              \begin{array}{ll}
                1, & \;\;\;\; \mbox{if } Z_1=1 \\
                (\nu_t-2)/\nu_t, & \;\;\;\;\mbox{if } Z_2=1 \\
                (\nu_s-1)/\nu_s, & \;\;\;\;\mbox{if } Z_3=1,
              \end{array}
            \right. \label{modeleq4a}
\end{eqnarray}
where $\delta_1$ is a degenerate r.v. at 1 and $\mathcal{G}$ and $\mathcal{B}$ are the Gamma and Beta distributions, respectively.
We impose that $\nu_t>2$ and $\nu_s>1$ so that $Y_i$ has finite variance ($\sigma^2$) under each individual model.

Inference is performed via MCMC - a Gibbs sampling with Metropolis Hastings (MH) steps for
the degrees of freedom parameters. Details of the algorithm are presented below.

\subsection{MCMC}
\label{s:mcmc}

We choose the following blocking scheme for the Gibbs sampler:
\begin{equation}\label{GSblock}
(\p,\Z,\U)\;,\;\bbeta\;,\;\sigma^2\;,\;(\nu_t,\nu_s).
\end{equation}
This blocking scheme minimises the number of blocks among the algorithms with only one
MH step (which is inevitable for the df parameters). The minimum number of blocks
reduces the correlation among the components, which speeds the convergence of the chain.
Moreover, the most important and difficult step is the one that samples from $(\p,\Z,\U)$
and sampling directly from its full conditional also favors the convergence properties of the chain.

The full conditional densities of (\ref{GSblock}) are all derived from the joint density of
all random components of the model.
\begin{eqnarray}\label{fpost}
\ds&&\pi(\Y,\bbeta,\sigma^2,\p,\Z,\U,\gamma,\nu_t,\nu_s|\X)\propto \nonumber \\
&&\pi(\Y|\bbeta,\sigma^2,\Z,\U,\gamma,\X)\pi(\U|\Z,\nu_t,\nu_s)\pi(\Z|\p)\pi(\p)\pi(\nu_t)\pi(\nu_s)
\pi(\bbeta)\pi(\sigma^2).\nonumber\\
\end{eqnarray}
The first two terms on the right hand side of (\ref{fpost}) are
given in Section \ref{secint}, for each individual model ($Z_j$).
The remaining terms are given in Section \ref{secmodel}.

The full conditional distributions of $\bbeta$ and $\sigma^2$ are easily devised and given by:
\begin{eqnarray} \nonumber
\ds(\bbeta|\cdot) &\sim& \mathcal{N}_q\left(\bSigma_{\beta}\left((\tau_{0}^2\I_q)^{-1}\bmu_0+(\sqrt{\u}\odot \X)'(\sqrt{\u}\odot \y)/(\gamma_j\sigma^2)\right)\;,\;\bSigma_{\beta}\right) \label{fc1} \\ \nonumber
  (\sigma^2|\cdot) &\sim& \mathcal{I}\mathcal{G}\left(a_0+n/2\;,\;b_0+\sum_{i=1}^n\frac{u_i(y_i-\X_{i\cdot}\bbeta)^2}{2\gamma_j}\right),\label{fc2}
\end{eqnarray}
where $\ds \bSigma_{\beta}=\left((\tau_{0}^2\I_q)^{-1}+(\sqrt{\u}\odot \X)'(\sqrt{\u}\odot \X)/(\gamma_j\sigma^2)\right)^{-1}$, $\sqrt{\u}$
is the $n$-dimensional vector with entries $\sqrt{\u_i}$, $\odot$ is the Hadamard product which multiplies term by
term of matrices with the same dimension and $\I_q$ is the identity matrix with dimension $q$.

The df parameters are sampled in a MH step with the following transition distribution (at the $k$-th iteration):
\begin{eqnarray}
\ds q\left(\nu_{t}^{k},\nu_{s}^{k}\right)&=&q(\nu_{t}^{k})q(\nu_{s}^{k}) \label{fc3a} \\
q(\nu_{t}^{k})&=&\left((1-Z_2)\mathds{1}(\nu_{t}^{k}=\nu_{t}^{k-1})+Z_2f_{\mathcal{N}}(\nu_{t}^{k};\nu_{t}^{k-1},\tau_{t}^2)\right) \label{fc3b} \\
q(\nu_{s}^{k})&=&\left((1-Z_3)\mathds{1}(\nu_{s}^{k}=\nu_{s}^{k-1})+Z_3f_{\mathcal{N}}(\nu_{s}^{k};\nu_{s}^{k-1},\tau_{s}^2)\right),\label{fc3c}
\end{eqnarray}
where $f_{\mathcal{N}}(l;a,b)$ is the density of a normal distribution with mean $a$ and variance $b$
evaluated at $l$. The respective acceptance probability of a move is
\begin{equation}\label{MHap}
\ds\alpha(k-1\rightarrow k)=
\min\left\{1,Z_1+Z_2\frac{\pi(\nu_{t}^k|\cdot)}{\pi(\nu_{t}^{k-1}|\cdot)}+Z_3\frac{\pi(\nu_{s}^k|\cdot)}{\pi(\nu_{s}^{k-1}|\cdot)}\right\},
\end{equation}
where
\begin{eqnarray}\nonumber
\ds \pi(\nu_t|\cdot)&\propto&\pi(\U|Z_2=1,\nu_t)\pi(\nu_t) \\ \nonumber
\pi(\nu_s|\cdot)&\propto&\pi(\U|Z_3=1,\nu_s)\pi(\nu_s).
\end{eqnarray}
This result is obtained by adopting the following dominating measure for both
the numerator and the denominator of the acceptance probability: $\mathbb{L}^2\otimes \mathbb{L}\otimes m$
if $Z_1=0$ and $\mathbb{L}^2\otimes m^2$ if $Z_1=1$, where $m$ is the counting measure
and $\mathbb{L}^d$ is the $d$-dimensional Lebesgue measure. 
The detailed balance along with the fact that chain is irreducible, makes this a valid MH algorithm \citep[see][]{tierney}.

Note that, once we have the output of the chain, estimates of the df parameters will be based on samples of $(\nu_t|Z_2=1)$ and $(\nu_s|Z_3=1)$, which justifies the transition distributions in (\ref{fc3a})-(\ref{fc3c}).

From (\ref{fpost}), the full conditional density of $(\p,\Z,\U)$ is
\begin{eqnarray}
\ds \pi(\U,\Z,\p|\cdot)&\propto&\pi(\y|\bbeta,\sigma^2,Z,U,\gamma,\X)\left[\prod_{i=1}^n{\pi(U_i|\Z,\nu_t,\nu_s)}\right]\pi(Z|p)\pi(\p) \nonumber \\
&\propto&\left[\prod_{i=1}^n{\pi(U_i|\cdot)}\right](r_1p_1)^{Z_1}(r_2p_2)^{Z_2}(r_3p_3)^{Z_3}\pi(\p). \nonumber
\end{eqnarray}
Defining $\ds w=\sum_{j=1}^3r_jp_j$ and $w_j=r_jp_j/w$, for $j=1,2,3$, we get
\begin{equation}\label{fc4b}
\ds \pi(\U,\Z,\p|\cdot)\propto\left[\prod_{i=1}^n{\pi(U_i|\cdot)}\right](w_1)^{Z_1}(w_2)^{Z_2}(w_3)^{Z_3}w\pi(\p).
\end{equation}
We can sample from (\ref{fc4b}) using the following algorithm.\\
\\{\scriptsize
\begin{tabular}[!]{|l|}
\hline\\
\parbox[!]{10cm}{
\texttt{
\begin{enumerate}
\item Simulate $\p$ from a density $\pi^*(\p) \propto w \pi(\p)$;
\item Simulate $\Z \sim Mult(1,w_1,w_2,w_3)$;
\item Simulate $U_i$ from the density $\pi(U_i|\cdot),\;\forall i$;
\item OUTPUT $(\u,\z,\p)$.
\end{enumerate}
}}\\  \hline
\end{tabular}}\\

Steps 2 and 3 are straightforward once we have that:
\begin{eqnarray} \nonumber
\ds r_1 &=& \prod_{i=1}^n\exp\left(-\frac{1}{2\gamma_1\sigma^2}\tilde{y}_{i}^2\right); \\ \nonumber
\ds r_2&=&\frac{\left(\frac{\nu_t-2}{\nu_t}\right)^{-n/2}\left(\nu_t/2\right)^{n\nu_t/2}\left(\Gamma\left(\frac{\nu_t+1}{2}\right)\right)^n}{\left(\Gamma\left(\frac{\nu_t}{2}\right)\right)^n\prod_{i-1}^n\left(\frac{\tilde{y}_{i}^2}{2\gamma_2\sigma^2}+\frac{\nu_t}{2}\right)^{(\nu_t+1)/2}};\\ \nonumber
\ds r_3&=&\left(\frac{\nu_s-1}{\nu_s}\right)^{-n/2}\left(\frac{\Gamma(\nu_s+1)}{\Gamma(\nu_s)}\Gamma(\nu_s+1/2)\right)^n\prod_{i=1}^n\left[\frac{F_{\mathcal{G}}\left(1;\nu_s+1,\frac{\tilde{y}_{i}^2}{2\gamma_3\sigma^2}\right)}{\left(\frac{\tilde{y}_{i}^2}{2\gamma_3\sigma^2}\right)^{\nu_s+1/2}}\right],
\end{eqnarray}
where $\tilde{y}_{i}=y_i-\X_{i\cdot}\bbeta$ and
$\ds F_{\mathcal{G}}(x;a,b)$ is the distribution function of a Gamma distribution
with parameters $(a,b)$ evaluated at $x$. Moreover,
\begin{eqnarray} \nonumber
\ds (U_i|Z_1=1,\cdot)&\sim&\delta_1; \\ \nonumber
\ds (U_i|Z_2=1,\cdot)&\sim&\mathcal{G}\left((\nu_t+1)/2\;,\;\tilde{y}_{i}^2/(2\gamma_2\sigma^2)+\nu_t/2\right); \\ \nonumber
\ds (U_i|Z_3=1,\cdot)&\sim&\mathcal{G}_{[0,1]}\left(\nu_s+1\;,\;\tilde{y}_{i}^2/(2\gamma_3\sigma^2)\right),
\end{eqnarray}
where $\mathcal{G}_{[0,1]}$ is a truncated Gamma distribution in $[0,1]$.

Step 1 is performed via rejection sampling (RS) proposing from the prior $\pi(\p)$ and
accepting with probability $\frac{w}{\max_j\{r_j\}}$. Simulated studies indicated that
the algorithm is computationally efficient.

Monte Carlo estimates of the posterior distribution of $\Z$ (denoted by $\brho$), i.e. the models' posterior probabilities,
based on a sample of size $M$, are given by
\begin{equation}
\ds \hat{\rho}_j=\widehat{P(Z_j=1|y)}=\frac{1}{M}\sum_{m=1}^M\mathds{1}(Z_{j}^{(m)}=1),\;\;j=1,2,3.\nonumber
\end{equation}

\subsection{Practical implementation}

The MCMC algorithm described in the previous section requires special attention
to some aspects to guarantee its efficiency.

An indispensable strategy consists of warming up the chain inside
each of the heavy-tailed models (Student-t and Slash). It
contributes in several ways to the efficiency of the algorithm.

Firstly, it contributes to the mixing of the chain among the different models.
If the chain starts at arbitrary values for the df parameters, it may move to
high posterior density values for one of them while the other is still at a
low posterior density value. This will make moves from the former model to the
latter very unlike, jeopardising the convergence. More specifically, one may take
the sample mean of the df parameters from their respective warm-up chains,
after discarding a burn-in, as the starting values for the full chain.

Secondly, the warm-up chains will achieve or approach local convergence
(inside each model). This will significantly speed the convergence of the full chain,
which will have as main purpose the convergence of the $\Z$ coordinate.

Finally, the warm-up chains are a good opportunity to tune the MH steps of the df
parameters. Given the unidimensional nature of the step and the random walk
structure, the acceptance rates should be around 0,44 \citep[see][]{roberts}.

\subsection{Prediction}

An often common step in any regression analysis is prediction for a new configuration $\X_{n+1}$ of the covariates. This procedure is straightforward in a MCMC context where a sample from the posterior predictive distribution of $Y_{n+1}$ can be obtained by adding two simple steps at each iteration of the Gibbs sampler after the burn-in.

Let $\left(\Z^{(m)},\bbeta^{(m)},{\sigma^2}^{(m)},\bgamma^{(m)},\nu_{t}^{(m)},\nu_{s}^{(m)}\right)$ be the state of the chain at the $m$-th iteration after the burn-in. Then, for each $m=1,2,\ldots$, firstly sample
$(u_{n+1}^{(m)}|\Z^{(m)},\nu_{t}^{(m)},\nu_{s}^{(m)})$ from (\ref{modeleq3a}) and finally sample
\begin{equation}\label{ppdY}
Y_{n+1}^{(m)}\sim\mathcal{N}\left(\X_{n+1}\bbeta^{(m)},{\sigma^2}^{(m)}({\Z'}^{(m)}\bgamma^{(m)})(u_{n+1}^{(m)})^{-1}\right),
\end{equation}
where ${\Z'}^{(m)}$ is a row vector and $\bgamma^{(m)}$ is a column vector.

One can also consider the posterior predictive distribution of $Y_{n+1}$ under one particular model, for example, the one with the highest posterior probability. In that case, it is enough to consider the sub-sample of the sample above corresponding to the chosen model.

\section{Simulated examples}
\label{secsimul}
In this section we introduce synthetic data examples to
better understand the properties of the proposed methodology.
Our goal are two fold: to provide strong empirical evidence that, (1) as long as information is
available, the true model is selected using the proposed methodology and (2) this selects the correct
or more adequate model more often than some traditional criteria.

We firstly present a study to see how well the proposed methodology
correctly identifies the true model. A second study shows the performance of the
criteria when the model has correlated covariates, which may cause problems in
the estimate of the fixed effects and variance parameter. Finally, a third
synthetic data set is generated from a residual mixture model
to investigated if the model that better approximates the true mixture distribution is chosen.

\subsection{Study I}
\label{secsimI}

In this study, data is generate data from one of the proposed distributions: Normal, Student-t
and Slash. We consider an intercept and two covariates, i.e. $\X_{i\cdot} = (1, X_{i1}, X_{i2})$, where $X_{i1}$ is a
standard Normal random variable, $X_{i2}$ is a Bernoulli
random variable with parameter $0.5$ and $i = 1, \ldots, n$.
The regression coefficients are $\bbeta^\top = (1,2,-2)$.
Finally for all models, the variance $\sigma^2$ is set to $1$.
The synthetic data were generated from each of the
following distributions:
\begin{enumerate}
 \item Normal;
 \item Student-t with degrees of freedom $\nu_t = 15$ and $\nu_t = 3$;
 \item Slash with degrees of freedom $\nu_s = 3.36$ and $\nu_s = 1.25$.
\end{enumerate}
Different sample sizes $n$ are also considered - $100$, $500$, $1000$
and $5000$, giving a total of 20 scenarios.
The degrees of freedom for the Slash were chosen to minimise the
Kullback-Leibler divergence between the Student-t with $\nu_t = 15$
and $\nu_t = 3$, respectively.

For each simulated scenario a Markov chain runs for $110k$ iterations,
with a burn-in of $10k$ giving a total posterior chain of $100k$
iterations. Convergence is checked using Geweke's criterion \citep{Gewe:eval:1992}
since we only ran one chain. The same chain size, burn-in period and convergence
verification were performed for all the examples in the paper.
Notice that the parametrisation adopted allows some parameters to be estimated using the whole
chain, independently of the model that
is visited in each iteration. This favors the chain convergence and the MC variance of the estimates of
($\bbeta, \sigma^2$).

The summary posterior results of one run are presented in Table~\ref{t:res}.
They show that as the sample size increases the
proposed methodology selects the correct model. Moreover, in the case where
data is generated from the Normal distribution, not only the correct model is correctly
chosen in all but one case, but also the estimated degrees of freedom of the Student-t
and of the Slash distributions are high - making these distributions similar to Gaussian.
Another important feature presented in the Table~\ref{t:res} is that the degrees of
freedom parameter of the generating model is well estimated.
For the non-generating model, the degrees of freedom parameter is reasonably estimated, in the sense of
making the respective model as close as possible to the true one.
For example, when the data is generated from the Student-t with $\nu_t = 15$ the
$\nu_s$ is estimated close to $3.36$, which is the value that minimises the
Kullback-Leibler divergence between the two distributions.
Table~\ref{t:res} also emphasises that, for small sample sizes $n=100$ or $n=500$,
there is not enough information about the tail behavior to clearly distinguish among the
models.

\renewcommand{\baselinestretch}{1}

\begin{table}[htb]
  \caption{\label{t:res} Results for Study I. Estimates (posterior mean) refer to one of the 50 replications.}
  \centering
  \resizebox{\textwidth}{!}{%
  \begin{tabular}{cc cc cc}
    \toprule
    Model & sample size & $\beta^\top = (1, 2, -2$) & $\sigma^2 = 1$ & ($\nu_t$, $\nu_s$) & $\brho = (\rho_1,\rho_2,\rho_3)$ \\
    \midrule
              & 100     & (1.153, 1.991, -2.305) & 1.121  & (10.62, 2.09) & (0.103, 0.537, 0.360) \\
              & 500     & (0.997, 2.047, -2.065) & 0.979  & (29.47, 4.43) & (0.882, 0.073, 0.045) \\
    Normal    & 1000    & (1.004, 1.981, -1.986) & 0.999  & (31.20, 4.45) & (0.644, 0.280, 0.076) \\
              & 5000    & (0.990, 1.979, -1.965) & 0.980  & (44.25, 5.32) & (0.749, 0.097, 0.154) \\
   \midrule
                            & 100     & (1.236, 1.829, -2.148) & 1.267  & (9.86, 1.86)  & (0.044, 0.439, 0.517) \\
                            & 500     & (1.074, 2.029, -2.042) & 1.038  & (28.64, 4.14) & (0.777, 0.151, 0.072) \\
    Student-t ($\nu_t = 15$)& 1000    & (1.012, 2.006, -1.991) & 0.982  & (21.24, 3.72) & (0.123, 0.609, 0.268) \\
                            & 5000    & (1.014, 2.000, -1.999) & 0.993  & (16.19, 3.19) & (0.000, 0.807, 0.193) \\
   \midrule
                            & 100     & (1.116, 1.865, -2.045) & 1.389  & (3.22, 1.22)  & (0.000, 0.371, 0.629) \\
                            & 500     & (0.978, 2.031, -1.923) & 1.244  & (3.36, 1.20)  & (0.000, 0.679, 0.321) \\
    Student-t ($\nu_t = 3$) & 1000    & (1.001, 2.005, -1.959) & 0.861  & (3.30, 1.25)  & (0.000, 0.990, 0.010) \\
                            & 5000    & (1.024, 2.007, -2.035) & 1.029  & (2.95, $-$)   & (0.000, 1.000, 0.000) \\
   \midrule
                           & 100     & (0.968, 2.097, -1.902) & 1.049  & (17.37, 2.76)  & (0.369, 0.357, 0.274) \\
                           & 500     & (0.976, 2.003, -2.039) & 0.963  & (19.90, 3.30)  & (0.167, 0.450, 0.383) \\
    Slash ($\nu_s = 3.36$) & 1000    & (1.004, 1.997, -2.010) & 1.015  & (17.72, 3.22)  & (0.020, 0.626, 0.354) \\
                           & 5000    & (1.029, 2.000, -2.044) & 0.963  & (22.61, 3.65)  & (0.000, 0.230, 0.770) \\
   \midrule
                           & 100     & (1.012, 1.988, -1.957) & 0.454  & (18.04, 2.75) & (0.344, 0.367, 0.289) \\
                           & 500     & (1.033, 2.026, -2.015) & 0.904  & (3.91, 1.29)  & (0.000, 0.280, 0.720) \\
    Slash ($\nu_s = 1.25$) & 1000    & (1.012, 2.012, -2.040) & 0.839  & (3.93, 1.35)  & (0.000, 0.561, 0.439) \\
                           & 5000    & (1.017, 1.988, -2.011) & 0.863  & ($-$, 1.30)   & (0.000, 0.000, 1.000) \\
   \bottomrule
\end{tabular}}
\end{table}


To check the capability of the proposed methodology in selecting the correct model, we performed
a Monte Carlo study with $50$ replicates of each of the $20$ generation schemes.
Table~\ref{t:eqm} presents the Mean Square Error (MSE) of the posterior estimates of the model parameters.
The MSE of the $\nu$ parameters was calculated considering only the replications in which the true model was selected.
The MSE of $\rho$ is the mean square error between the posterior estimate of the true model's probability and 1.
From Table~\ref{t:eqm}
we can see that, even for small sample sizes, the MSE values of $\beta$ and $\sigma^2$
indicate a very good recovery of the true values. The MSE values
of $\nu_t$ include some large values for small samples sizes when there is not enough information to estimate
precisely the degrees of freedom. The difference in the magnitudes of $\nu_t$ and $\nu_s$ are explained by the
difference in the scale of those parameters. Finally, the last column of the table
shows the percentage of times that the correct model was selected (i.e. had the highest posterior probability).

\renewcommand{\baselinestretch}{1}

\begin{table}[htb]
  \caption{\label{t:eqm} Mean Square error estimates from the 50 replications for each parameter.
  The last column is the percentage of correct selection in the 50 replications. The $(\times10^{3})$
  means that the reported value is MSE$\times10^{3}$.}
  \centering
  \resizebox{\textwidth}{!}{%
  \begin{tabular}{cc cc cc c}
    \toprule
    Model & sample size & $\beta (\times10^{3})$ & $\sigma^2 (\times10^{3})$ & $\nu_t$ or $\nu_s$ & $\rho$ & *Pct SCM \\
    \midrule
              & 100     & (16.29, 10.15,  34.44) & 32.12  & - & 0.303 & $80\%$ \\
              & 500     &  (3.24,  1.84,   7.23) &  3.96  & - & 0.207 & $86\%$\\
    Normal    & 1000    &  (2.24,  0.75,   5.03) &  2.24  & - & 0.192 & $84\%$\\
              & 5000    &  (0.40,  0.24,   0.66) &  0.46  & - & 0.123 & $88\%$\\
    \midrule
                             & 100     & (26.04, 11.27,  44.80) & 72.04  & 15.852 & 0.456 & $30\%$ \\
                             & 500     &  (4.39,  2.42,   9.83) &  5.91  & 45.564 & 0.288 & $64\%$\\
    Student-t ($\nu_t = 15$) & 1000    &  (2.25,  0.82,   5.40) &  2.84  & 36.736 & 0.284 & $68\%$\\
                             & 5000    &  (0.35,  0.18,   0.67) &  0.46  & 11.580 & 0.131 & $80\%$\\
    \midrule
                             & 100     & (10.82,  6.26,  23.04) & 97.85  & 71.458 & 0.365 & $32\%$ \\
                             & 500     &  (2.81,  0.87,   4.82) & 83.92  & 0.580  & 0.226 & $62\%$\\
    Student-t ($\nu_t = 3$)  & 1000    &  (0.95,  0.55,   1.98) & 53.03  & 0.151  & 0.176 & $76\%$\\
                             & 5000    &  (0.16,  0.09,   0.42) &  5.79  & 0.022  & 0.000 & $100\%$\\
    \midrule
                             & 100     & (21.90,  10.93,  33.79) & 45.65  & 1.625 & 0.557 & $10\%$ \\
                             & 500     &  (2.71,   1.82,   5.89) &  5.27  & 0.417 & 0.447 & $32\%$\\
    Slash ($\nu_s = 3.36$)   & 1000    &  (1.49,   0.84,   3.35) &  3.38  & 0.185 & 0.412 & $40\%$\\
                             & 5000    &  (0.42,   0.19,   0.74) &  0.70  & 0.176 & 0.268 & $54\%$\\
    \midrule
                             & 100     &  (7.17,  5.37,  16.46) & 86.38  & 0.056 & 0.259 & $62\%$ \\
                             & 500     &  (2.46,  1.16,   3.94) & 40.40  & 0.021 & 0.193 & $72\%$\\
   Slash ($\nu_s = 1.25$)    & 1000    &  (0.95,  0.67,   1.58) & 49.71  & 0.006 & 0.200 & $64\%$\\
                             & 5000    &  (0.24,  0.09,   0.56) & 23.65  & 0.001 & 0.206 & $78\%$\\
   \bottomrule
\end{tabular}}
\end{table}

\subsection{Study II}
\label{secsimII}

This study investigates how the model selection procedure and the parameter estimation is affected in the presence of correlated covariates.
We generate data from a model with $e_i \sim \mathcal{T}(0,1,3)$, $\beta = (1,2,-2,1)$, $\X_{i} = (1, X_{i1}, X_{i2}, X_{i3})$ and
$X_{i3} = 2X_{2i} + \mathcal{N}(0,0.5)$ which induces an average correlation of $0.9$ between the two covariates.

We reproduce $50$ replicates of this scenario with different sample
sizes $n = 500$, $1000$, $2000$ and $5000$. Table~\ref{t:cho2} shows the percentage of
times that the proposed methodology selects each model and compares it with the other model selection criteria. It is clear
that the traditional criteria have problems to distinguish between models with heavy tails even when the
sample size increases, whilst the proposed methodology performs a robust selection specially for large
sample sizes where tail information is more abundant.

\begin{table}
  \caption{\label{t:cho2} Percentage of the times each model was selected for different sample sizes in Study II.}
  \centering
  \begin{tabular}{c cc cc cc}
    \toprule
    & \multicolumn{3}{c}{Proposed methodology} & \multicolumn{3}{c}{*WAIC}\\
    sample size & Normal & Student-t & Slash & Normal & Student-t & Slash \\
    \cmidrule(lr){1-1}\cmidrule(lr){2-4}\cmidrule(lr){5-7}
    500     & 0\% & 62\%   & 38\% & 6\% & 42\%   & 52\% \\
    1000    & 0\% & 62\%   & 38\% & 0\% & 46\%   & 54\% \\
    2000    & 0\% & 94\%   & 6\%  & 0\% & 58\%   & 42\% \\
    5000    & 0\% & 100\%  & 0\%  & 0\% & 54\%   & 46\% \\
   \bottomrule
   \multicolumn{7}{l}{{\footnotesize *WAIC = all the other criteria (CPO, DIC, EAIC and EBIC) select the same model as WAIC.}}\\
\end{tabular}
\end{table}

Figure~\ref{f:eqm} shows some results regarding the estimation of the regression coefficients. They are quite similar
between the proposed methodology and the other model selection criteria. The same, however, does not happen when we look
at the estimates for the variance $\sigma^2$. The poor performance of the other criteria in selecting the correct model is
clearly reflected in the estimation of the variance, which is significantly overcome by the respective estimates obtained
with the proposed methodology.

\begin{figure}[htb]
  \centering
  {\includegraphics[width=0.35\textwidth]{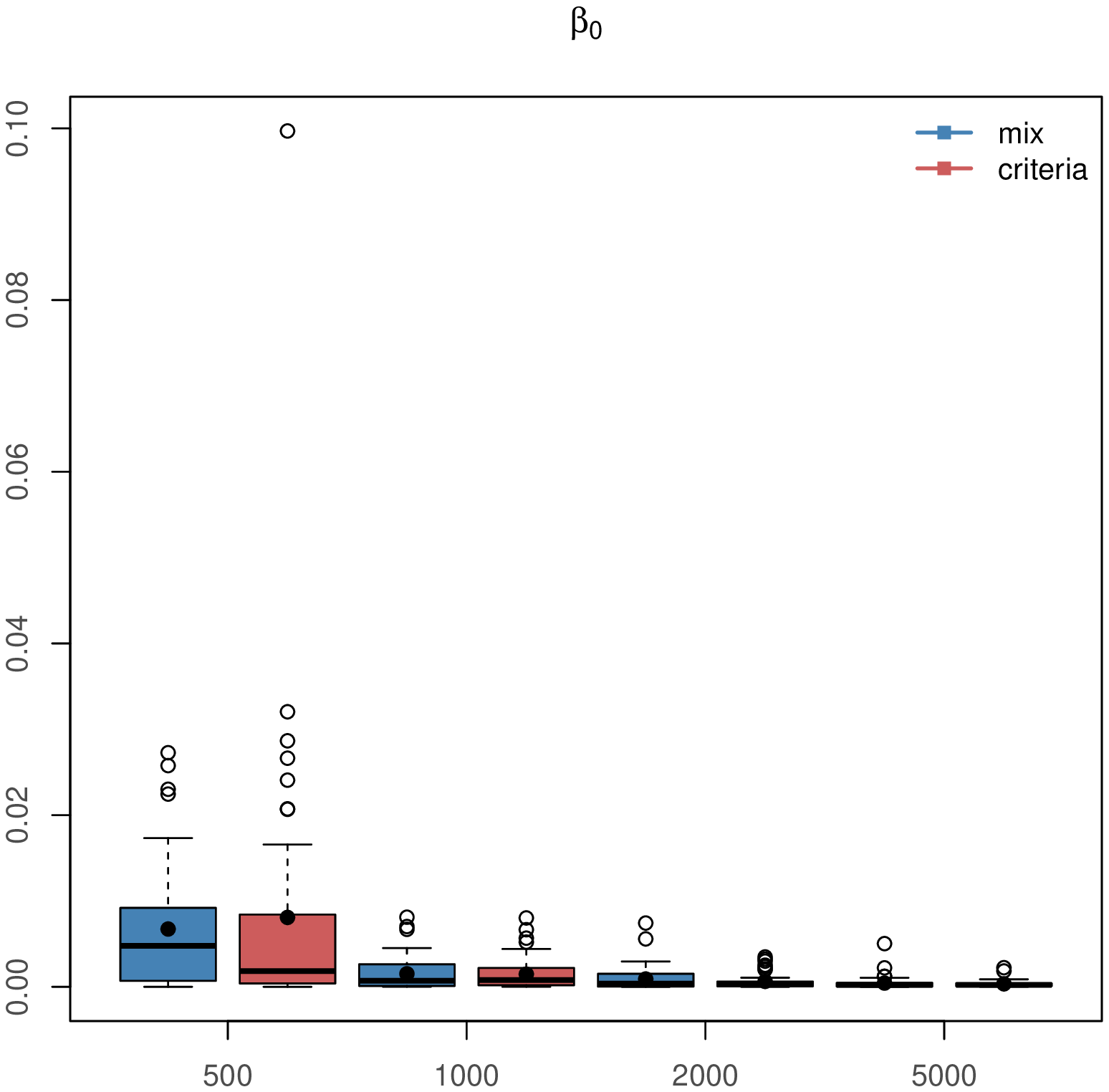}}~{\includegraphics[width=0.35\textwidth]{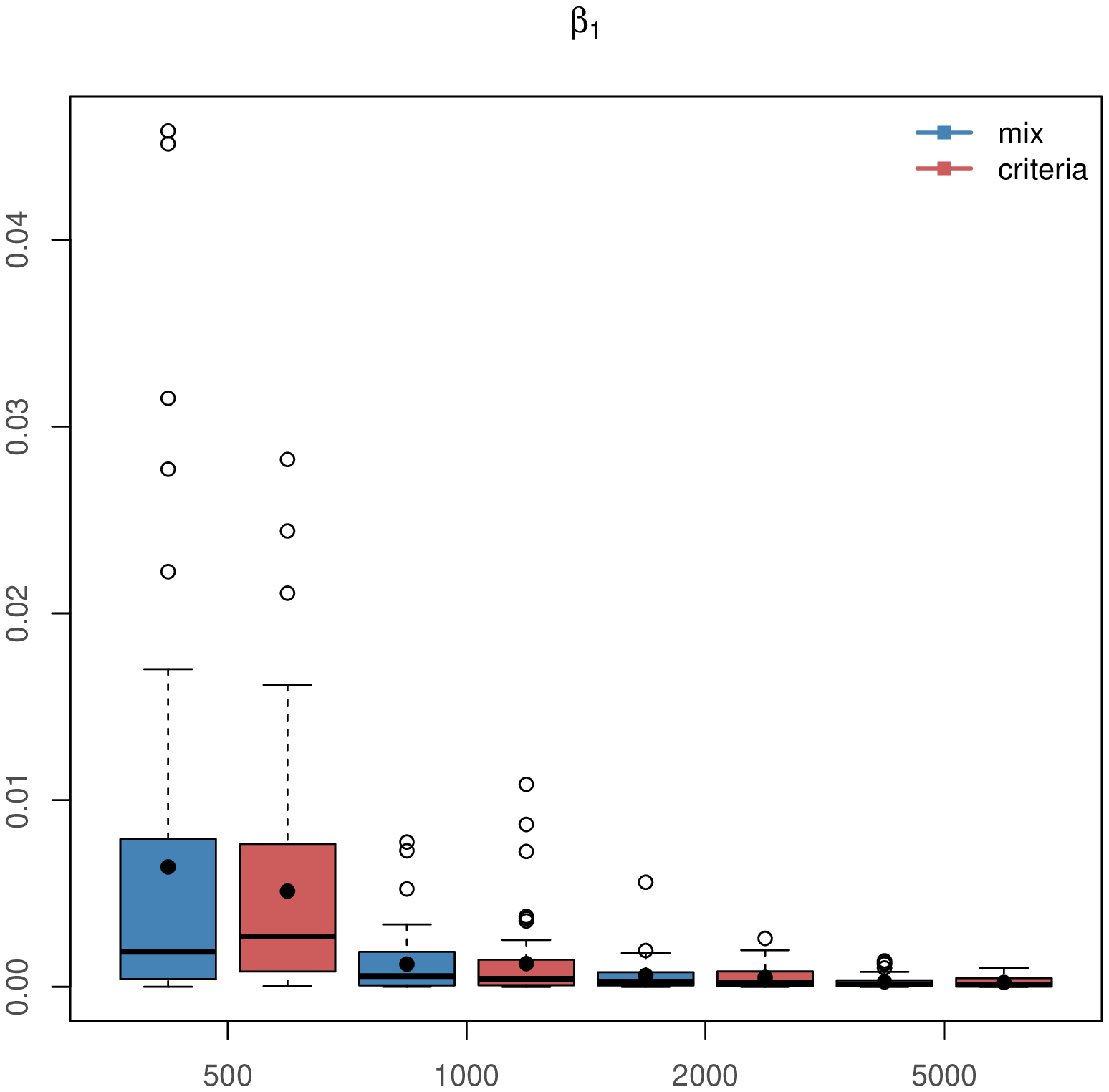}} \\
  {\includegraphics[width=0.35\textwidth]{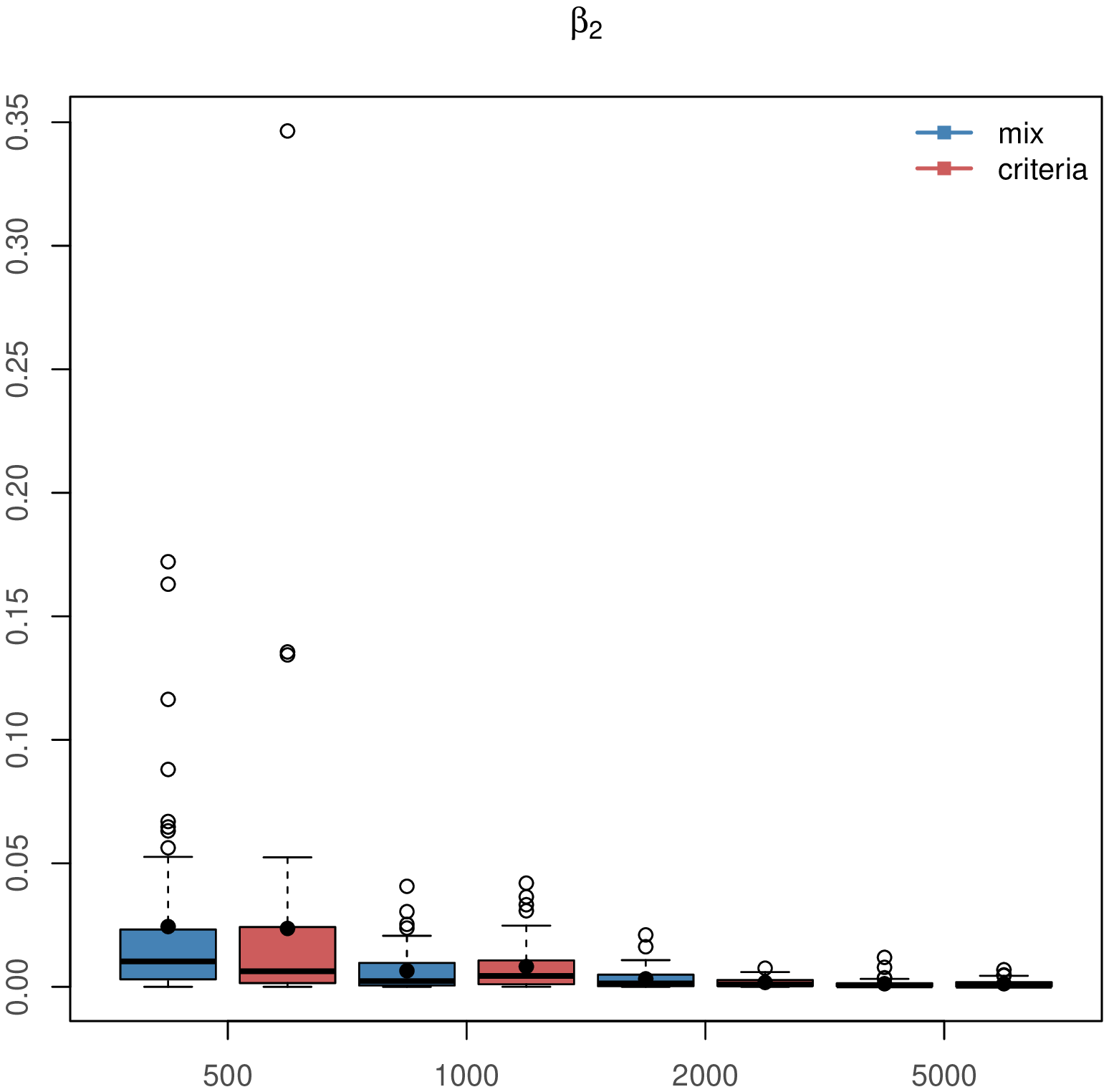}}~{\includegraphics[width=0.35\textwidth]{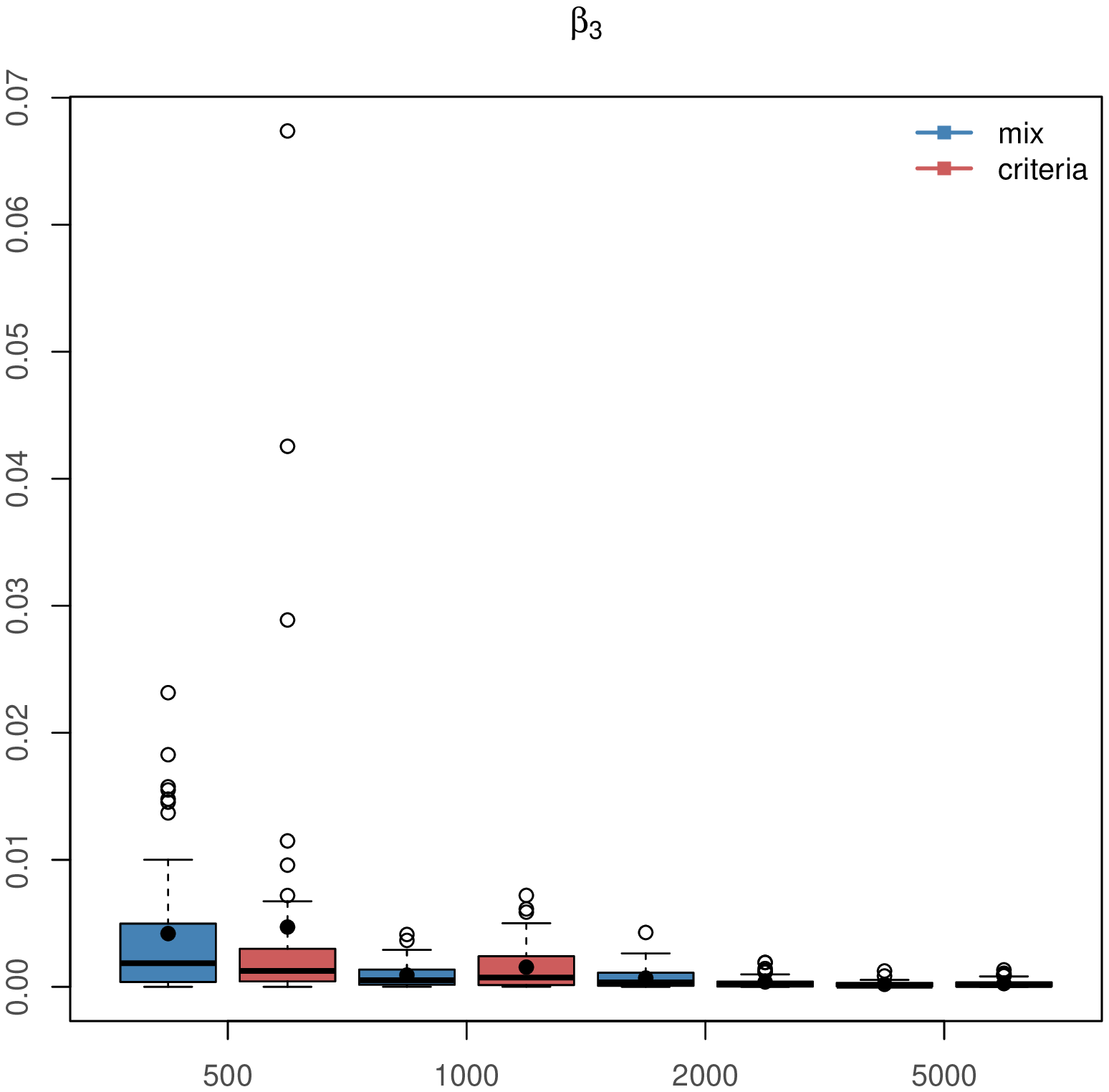}} \\
  \caption{Study II - boxplots of the mean square error (mse) of the $\beta$ estimates (posterior mean under the selected model) for the $50$ replicates for different sample sizes. Colour blue refers to the proposed methodology and red to the other criteria. The black solid dots represent the mean.}
  \label{f:eqm}
\end{figure}

\begin{figure}[htb]
  \centering
  {\includegraphics[width=0.45\textwidth]{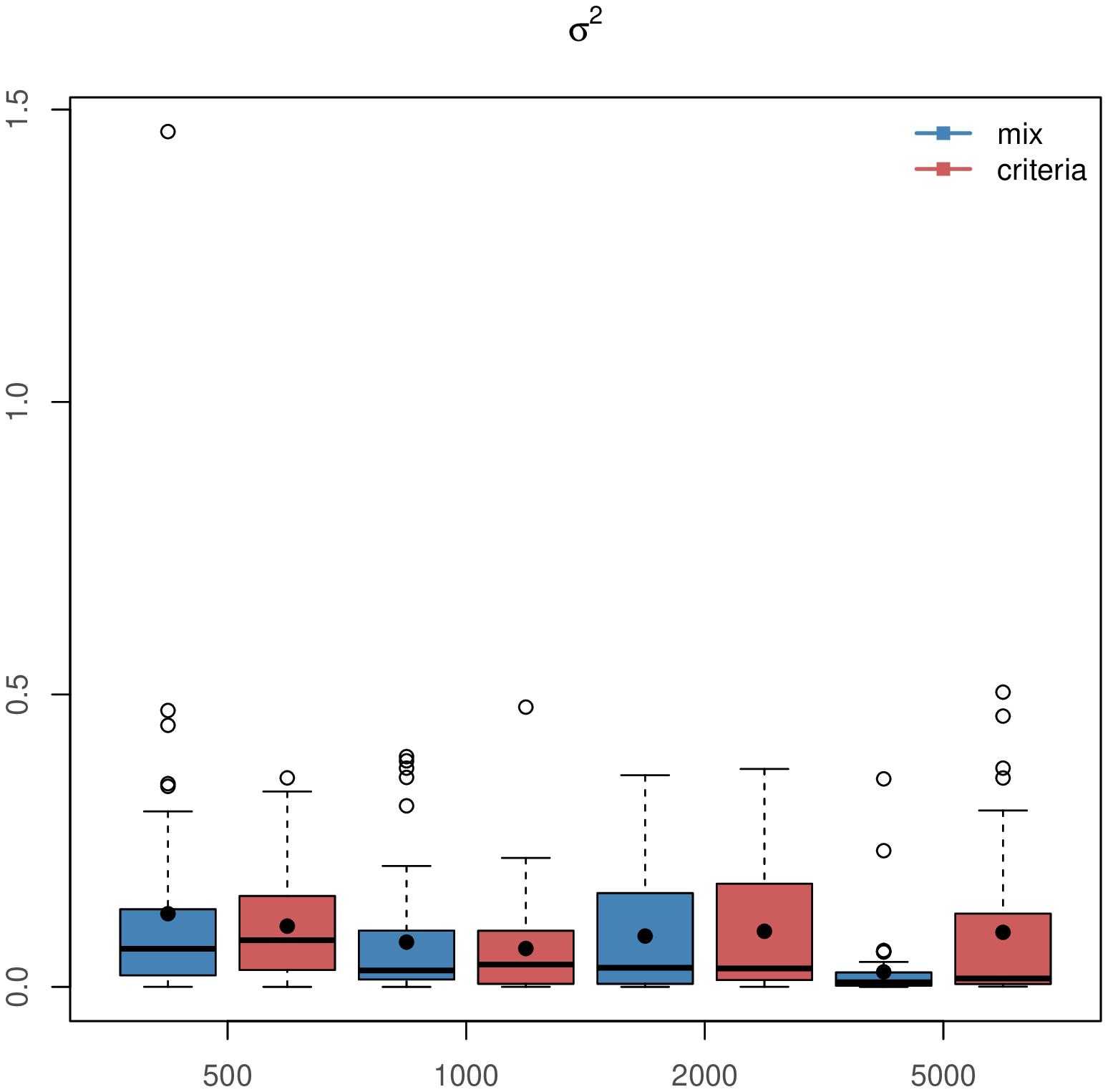}} \\
  \caption{Study II - boxplots of the mean square error (mse) of the $\sigma^2$ estimates (posterior mean under the selected model) for the $50$ replicates for different sample sizes. Colour blue refers to the proposed methodology and red to the other criteria. The black solid dots represent the mean.}
  \label{f:eqm_sig}
\end{figure}

%
%

\subsection{Study III}
\label{secsimIII}

In this study the generating distribution for the
error term is not a specific distribution as in Section~\ref{secsimI} and \ref{secsimII},
but a mixture of the Normal, Student-t and Slash distributions.
More specifically, we consider $e_i \sim 0.1 \mathcal{N}(0,1) + 0.6 \mathcal{T}(0,1,4.00)  + 0.3 \mathcal{S}(0,1,1.15)$,
with the same $\X$, $\bbeta$'s and $\sigma^2$ from Study I. The sample sizes $n$ considered are the same as in Study II.

Again, $50$ replications are generated.
It is important to notice that our modeling framework to perform robust model selection cannot
retrieve the generating model, since we assume that all the residuals
must be from the same distribution. Nevertheless, a good fit may still be provided by one of the individual models.
Table~\ref{t:res2} show the result of one of the $50$ replications. It is clear that the posterior
distribution identifies the Student-t distribution as the best candidate model, specially as the sample size
increases.

\renewcommand{\baselinestretch}{1}

\begin{table}
  \caption{\label{t:res2} Posterior results (mean) for Study III.}
  \centering
  \resizebox{\textwidth}{!}{%
  \begin{tabular}{c cc cc }
    \toprule
    sample size & $\beta^\top = (1, 2, -2$) & $\sigma^2 = 1$ & ($\nu_t$, $\nu_s$) & $\brho = (\rho_1,\rho_2,\rho_3)$\\
    \midrule
    500     & (1.018, 2.014, -1.993) & 1.265  & (3.56, 1.24)  & (0.000, 0.786, 0.214) \\
    1000    & (1.038, 1.972, -1.981) & 0.839  & (4.43, 1.51)  & (0.000, 0.914, 0.086) \\
    2000    & (1.027, 2.012, -2.046) & 0.946  & (4.61, 1.53)  & (0.000, 0.922, 0.078) \\
    5000    & (0.985, 1.979, -2.073) & 0.918  & (4.03, -)     & (0.000, 1.000, 0.000) \\
   \bottomrule
\end{tabular}}
\end{table}


For sample size $2000$, Figure~\ref{f:sep} shows the fit of the selected model (Student-t) and the other two models,
Normal and Slash, fitted individually. It also shows the true generating distribution for the error term.
It is clear that, although the posterior
distribution is different from the true generating distribution, by definition, it approximates fairly very well the
original one.

\begin{figure}[htb]
  \centering
  {\includegraphics[width=0.75\textwidth]{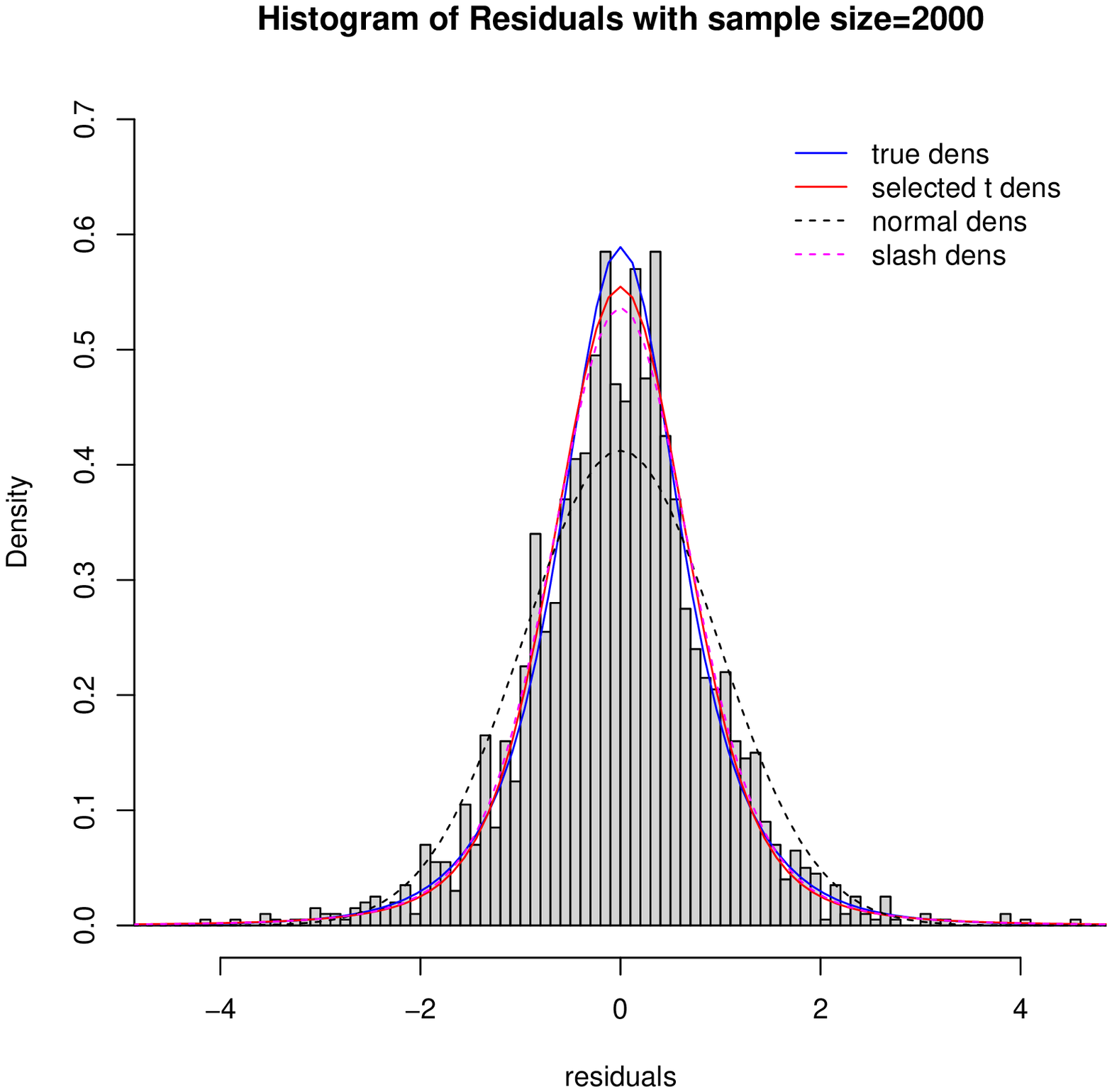}} \\
  \caption{Study III - residual histogram with true generating model (blue), selected Student-t model (red), Normal model (dashed black),
  and Slash model (dashed magenta) for sample size $2000$.}
  \label{f:sep}
\end{figure}


Table~\ref{t:cho3} shows that the proposed model consistently chooses the dominating model (student-t) as
the sample size increases. The same does not happen for the other model selection criteria.

\begin{table}
  \caption{\label{t:cho3} Percentage of the times each model was selected for different sample sizes (Study III).}
  \centering
  \begin{tabular}{c cc cc cc}
    \toprule
    & \multicolumn{3}{c}{Proposed methodology} & \multicolumn{3}{c}{*WAIC}\\
    sample size & Normal & Student-t & Slash & Normal & Student-t & Slash \\
    \cmidrule(lr){1-1}\cmidrule(lr){2-4}\cmidrule(lr){5-7}
    500     & 0\% & 70\%   & 30\% & 6\% & 50\%   & 44\% \\
    1000    & 0\% & 98\%   & 2\%  & 4\% & 46\%   & 50\% \\
    2000    & 0\% & 100\%  & 0\%  & 0\% & 62\%   & 38\% \\
    5000    & 0\% & 100\%  & 0\%  & 0\% & 52\%   & 48\% \\
   \bottomrule
   \multicolumn{7}{l}{{\footnotesize *WAIC = all the other criteria (CPO, DIC, EAIC and EBIC) select the same model as WAIC.}}\\
\end{tabular}
\end{table}

\section{Application}\label{secreal}

\subsection{AIS}\label{secais}
In this section we introduce a biomedical study from the
Australian Institute of Sports (AIS) in $202$ athletes \citep{cook}.
To exemplify our modeling we consider the body mass index (BMI) as our response
and the percentage of body fat (Bfat) as our covariate. This way,
we have a regression model with $\X_{i\cdot} = (1,\mbox{Bfat}_i)$ for $i=1,\ldots,202$.

We fit each individual model separately and the proposed mixture model. Results are presented in Tables \ref{t:crit_ais} and \ref{t:ais}.
Note that, although the Slash model is chosen by all the criteria, and the estimates of the regression coefficients are similar between the individual fit and our model, significant, though not large, differences can be found for the estimates of the variance $\sigma^2$. This highlights the model averaging feature of our approach, which is particularly appealing when one of the models is not chosen with very high probability - in this example $\brho$ - $(0.001, 0.304, 0.695)$.


\renewcommand{\baselinestretch}{1}

\begin{table}
  \caption{\label{t:crit_ais} Model selection criterion for the fitting of the Normal,
  Student-t and Slash regression models.}
  \centering
  \begin{tabular}{cccccc}
    \toprule
    Models & $-$LPML & DIC & EAIC & EBIC & WAIC\\
    \midrule
    Normal     & 498.497 & 2976.407 & 994.142 & 1000.758 & 996.971 \\
    Student-t  & 491.623 & 2935.009 & 982.059 &  991.984 & 983.210 \\
    Slash      & 491.033 & 2931.636 & 980.633 &  990.558 & 982.049 \\
   \bottomrule
\end{tabular}
\end{table}


\renewcommand{\baselinestretch}{1}

\begin{table}
  \caption{\label{t:ais} Posterior results for the BMI analysis with Bfat as covariate
  for the robust mixture model.
  The posterior mean, median a standard deviation (Sd) are presented as well as
  the 95\% high posterior density (HPD) interval.}
  \centering
  \begin{tabular}{cccccc}
    \toprule
    Model & Parameters & Mean & Median & Sd & 95\% HPD interval\\
    \midrule
             & $\beta_0$      & 21.810  & 21.810 & 0.419 & (20.980, 22.620) \\
 Slash Model & $\beta_1$      &  0.070  &  0.070 & 0.028 & (0.015,  0.126)  \\
             & $\sigma^2$     & 10.093  &  8.989 & 3.587 & (5.702, 17.940)  \\
             & $\nu_{s}$     &  1.705  &  1.612 & 0.442 & (1.110, 2.569)   \\
    \midrule
               & $\beta_0$      & 21.794  & 21.799 & 0.418 & (21.022, 22.667) \\
  Slash Selected & $\beta_1$    &  0.071  &  0.071 & 0.028 & (0.016,  0.128)  \\
  Model        & $\sigma^2$     &  9.200  &  8.462 & 2.954 & (5.543, 14.765)  \\
               & $\nu_{s}$   &  1.716  &  1.628 & 0.434 & (1.111, 2.549)   \\
   \bottomrule
\end{tabular}
\end{table}

\subsection{WAGE}\label{secwage}

The wage rate data set presented in \citet{mroz} is used to extend our
modeling framework for censored data. The data consist of the wage
of 753 married white women, with ages between 30 and 60 years old in 1975.
Out of the 753 women considered in this study, 428 worked at some point during
that year. When the wives did not work in 1975, the wage rates were set equal
to zero. However, it is considered that they may had a cost in that year and,
therefore, these observations are considered left censored at zero.
The considered response is $Y_i$ - the wage rate, and the explanatory variables
are the wife's age ($X_{1i}$), years of schooling ($X_{2i}$), number of children younger than six
years old in the household ($X_{3i}$) and number of children between six and nineteen
years old ($X_{4i}$). Thus, $\X_{i\cdot} = (1, X_{1i}, X_{2i}, X_{3i}, X_{4i})$,
$i = 1, \ldots, 753$.

Since the Wage data is censored, we have the following characteristic for our
response variables
\begin{eqnarray} \nonumber
   Y_{obs_i}=\left\{
              \begin{array}{ll}
                \kappa_i, & \mbox{if } Y_i \leq \kappa_i, \\
                Y_i       & \mbox{if } Y_i > \kappa_i,
              \end{array}
            \right.
\end{eqnarray}
with $\kappa_i = 0$.

Suppose that, out of the $n$ responses, $C$ of them are censored as $\kappa_i$. From a Bayesian perspective,
these observations, $\Y_C = (y_1,\ldots,y_C)$, can be viewed as latent and sampled
at each step of the MCMC. Because of the model structure presented
in~(\ref{modeleq1a})-(\ref{modeleq4a}), it is simple to notice that
\begin{equation} \label{eq:cens}
(Y_c|Z_j=1, u_c, \bbeta, \sigma^2, \nu_j) \sim \mathcal{TN}\left(\X \bbeta,\sigma^2\gamma_j u_c^{-1}), \lfloor -\infty, \kappa_c \rfloor \right),\; c=1,\ldots,C,
\end{equation}
where $\mathcal{TN}$ is a truncated Normal distribution with limits $\lfloor -\infty, \kappa_c \rfloor$.
Therefore, we simply add a new sampling step in the blocking scheme as
\begin{equation}
(\p,\Z,\U)\;,\;\Y_C\;,\;\bbeta\;,\;\sigma^2\;,\;(\nu_t,\nu_s).\nonumber
\end{equation}
This simple extension allows our modeling framework to deal with any kind of censored data, where,
for each type of censoring scheme, the new limits of (\ref{eq:cens}) must be calculated.

To obtain our final chain with $100k$ observations, a Markov Chain of $110k$ iterations
is run and the first $10k$ observations are discarded for burn-in.  The posterior
estimate for $\brho$ is $(0.000, 0.025, 0.975)$, which indicates the Slash distribution as the preferred one.

\renewcommand{\baselinestretch}{1}

\begin{table}
  \caption{\label{twage} Posterior results for the Wage data analysis.
  The posterior mean, median a standard deviation (Sd) are presented as well as
  the 95\% high posterior density (HPD) interval.}
  \centering
  \begin{tabular}{ccccc}
    \toprule
    Parameters & Mean & Median & Sd & 95\% HPD interval\\
    \midrule
    $\beta_0$      & -1.174  & -1.152 & 1.408 & (-3.952,  1.523) \\
    $\beta_1$      & -0.109  & -0.108 & 0.022 & (-0.155, -0.066) \\
    $\beta_2$      &  0.646  &  0.645 & 0.070 & (0.508,  0.783)  \\
    $\beta_3$      & -3.114  & -3.103 & 0.387 & (-3.887, -2.381) \\
    $\beta_4$      & -0.293  & -0.294 & 0.129 & (-0.539, -0.039)\\
    $\sigma^2$     & 26.542  & 24.740 & 7.843 & (14.784, 42.624) \\
    $\nu_{s}$      & 1.410   &  1.374 & 0.207 & (1.110, 1.788)   \\
   \bottomrule
\end{tabular}
\end{table}

Table \ref{twage} summarises the posterior results.
\citet{garay2015} studied this data set from a Bayesian perspective fitting
a variety of independent models in the NI family. In their study,
the Slash distribution was selected as the preferred one as in our case.
Moreover, the posterior mean estimates of the fixed effects parameters
in Table \ref{twage} are very similar to the ones presented in
\citet{garay2015} as well as the statistical significance of each covariate.
The wife's age, the number of children younger than six years old in the household
and the number of children between six and nineteen years old tend
to decrease the wage rate, while years of schooling tend to increase the salary.
The posterior estimates encountered by \citet{garay2015} for $\nu_{s}$ and $\nu_t$,
fitting separate models, were $1.438$ and $5.279$, respectively, which agree with our results.
Our mixture model approach was able to correctly
capture the Slash distribution without separately fitting the three models.
Moreover, it provides a high computational gain given the high posterior probability of the Slash model.

\section{Conclusions and some extensions}\label{secconc}

Our proposed methodology has shown considerable flexibility
to perform model selection for heavy-tailed data explained by covariates
under a regression framework. From theoretical arguments,
simulation studies and application to real data sets, it is clear that the
methodology provides a robust alternative to select the
best model instead of relying on model selection criteria
which can be unstable \citep{Gelman}.

In Section~\ref{secwage}
we extend the methodology to censored heavy-tailed
regression, showing that the extension is straightforward and
achieved by adding one simple step to the Gibbs sampler. Also,
the extension of the algorithm described in Section~\ref{s:mcmc}
to include more distributions in the finite mixture is almost direct.
Finally, it is clear from our results that this finite mixture idea can be used
in a variety of problems where a common parametrisation exists for a family
of distributions.

Besides the computational advantage of fitting one general model
instead of $K$ separated models, we also emphasise that our robust model selection framework
automatically performs multiple comparison between the $K$ models,
which gives an advantage if one, instead, prefer to use the Bayes factor
performing 2 by 2 comparisons in each individual model. Moreover, our approach also
allows the use of model averaging.

Although the proposed methodology enriches the class of traditional
censored regression models, we conjecture that the it may not
provide satisfactory result when the response
exhibit asymmetry besides the non-normal behavior. To overcome this
limitation extending the work to account for skewness behavior is
also a possibility, for example by using the scale mixtures of
skew-normal (SMSN) distributions proposed in \citet{lachos}.
Nevertheless, a deeper investigation of those modifications in
the parametrisation and implementations is beyond the scope of
this paper, but provides stimulating topics for further research.
Another possibility of future research is to generalise
these modeling framework to linear mixed model, e.g., clustered,
temporal or spatial dependence. These extensions are being studied
in a different manuscript.

\section*{Appendix}

\appendix

\section{Proof of Lemma 1}\label{a:proof}

The posterior density of $p$ is given by
\begin{equation*}
\ds f(\p|\Y=\y)=\sum_{k=1}^Kf(\p|\Y=\y,Z_k=1)P(Z_k=1|\Y=\y).
\end{equation*}
If we multiply both sides by $p_j$, integrate with respect to $\p$ and use the fact that $\p$ and $\Y$ are conditionally independent given $\Z$, we get
\begin{equation*}
\ds\mathbb{E}[p_j|\y]=\sum_{k=1}^K\mathbb{E}[p_j|Z_k=1]P(Z_k=1|\Y=\y),
\end{equation*}
which is a weighted average of $\{\mathbb{E}[p_j|Z_k=1]\}_{k=1}^K$ and, therefore, implies that
\begin{equation*}
\ds\mathbb{E}[p_j|\y]\in\left(\min_k\left\{\mathbb{E}[p_j|Z_k=1]\right\},\max_k\left\{\mathbb{E}[p_j|Z_k=1]\right\}\right).
\end{equation*}
Now note that $\ds (\p|Z_k=1)\sim Dir\left(\alpha_1+\mathds{1}\{k=1\},\ldots,\alpha_K+\mathds{1}\{k=K\}\right)$ and
$\ds\mathbb{E}[p_j|Z_k=1]$ is $\frac{\alpha_j}{\alpha_0+1}$ if $j\neq k$ and is $\frac{\alpha_j+1}{\alpha_0+1}$ if $j=k$,
where $\alpha_0=\sum_{k=1}^K\alpha_k$. This concludes the proof.

\section{Model Comparison Criteria}\label{a:crit}

The DIC \citep{Spiegelhalter} is a generalisation
of the Akaike information criterion (AIC) and
is based on the posterior mean of the deviance,
which is also a measure of goodness-of-fit.
The $\textrm{DIC}$ is defined by
$$
\textrm{DIC}=\overline{\textrm{D}}(\bftheta)+\rho_{\textrm{D}} = 2
\overline{\textrm{D}}(\bftheta) - \textrm{D}(\tilde{\bftheta}),
$$
where $\tilde{\bftheta} = \textrm{E}[\bftheta|\y]$,
$\overline{\textrm{D}}(\bftheta)$ is the posterior expectation of the deviance
and $\rho_{\textrm{D}}$ is a measure of the effective number of parameters in the model.
The effective number of parameters, $\rho_{\textrm{D}}$, is defined as $\rho_{\textrm{D}}= \overline{\textrm{D}}(\bftheta) -
\textrm{D}(\tilde{\bftheta})$, with $\overline{\textrm{D}}(\bftheta)=-2\textrm{E}[ \log f(\y|\bftheta) | \y]$.

The computation of the integral $\overline{\textrm{D}}(\bftheta)$ is complex,
a good solution can be obtained using the MCMC sample
$\{\bftheta_{1},\dots, \bftheta_{M}\}$ from the posterior
distribution. Thus,  we can obtain an approximation of the
$\textrm{DIC}$ by first computing the sample posterior mean of the
deviations
$\overline{\textrm{D}}= -2 \frac{1}{M} \sum_{m=1}^{M} \log f(\y|\bftheta_m)$ and then
$\widehat{\textrm{DIC}}= 2\overline{\textrm{D}} - \textrm{D}(\tilde{\bftheta})$.

The expected Akaike information criterion (EAIC),
and the expected Bayesian information criterion (EBIC)
\citep[see discussion at][]{Spiegelhalter} are given by
$$\widehat{EAIC}=\overline{\textrm{D}}+2\vartheta~~~ \mbox{and}~~~ \widehat{EBIC}=\overline{\textrm{D}}+\vartheta\log\left(n\right),  $$
respectively, where $\vartheta$ is the number of model parameters and
can be used for model comparison.

Recently, \citet{Watanabe} introduced the Widely Applicable Information Criterion (WAIC).
The WAIC is a fully Bayesian approach for estimating the out-of-sample expectation.
The idea is to compute the log pointwise posterior predictive density ($\it{lppd}$) given by
$ lppd = \sum_{i=1}^n \log \left( \frac{1}{M} \sum_{m=1}^{M} f(y_i|\bftheta_m) \right)$,
and then, to adjust for overfitting, add a term to correct for effective number of parameters
$\rho_{\textrm{WAIC}} = \sum_{i=1}^n V_{m=1}^M(\log f(y_i|\bftheta_m))$,
where $V_{m=1}^M(a) = \frac{1}{M-1} \sum_{m=1}^{M} ( a_m - \bar{a})^2$. Finally, as
proposed by \citet{Gelman}, the WAIC is given by
$$
\textrm{WAIC} = -2 (lppd - \rho_{\textrm{WAIC}}).
$$
So far, for the DIC, EAIC, EBIC and WAIC, the model that best fits a data set
is the model with the smallest value of the criterion.

Another common alternative is the conditional predictive ordinate ($CPO$) approach \citep{Geisser}.
This statistic is based on the cross validation criterion to compare the models.
Let $\y= \left\{y_1,\cdots,y_n\right\}$ be an observed sample from
$f\left(\cdot|\bftheta\right)$. For the $i$-th observation, the $CPO_i$ can be written as:
\begin{eqnarray}\label{CPO} \nonumber
CPO_i = p\left(y_i|\y_{(-i)}\right) = \int_{\bftheta\in
\bfTheta}f\left(y_i|\bftheta \right)\pi\left(\bftheta |\y_{(-i)}\right)d\bftheta
= \left\{\int_{\bftheta\in \bfTheta}\frac{\pi\left(\bftheta
|\y\right)}{f\left(y_i|\bftheta \right)}d\bftheta \right\}^{-1},
\end{eqnarray}
where $\y_{(-i)}$ is the $\y$ without the $i$-th observation
and $\pi\left(\bftheta |\y\right)$ denotes the posterior distribution of $\bftheta$.
Thus, the $CPO_i$ has the idea of the leave one out cross validation, where each value is
an indicator of the likelihood value given all the other observations. For this reason, low
values of $CPO_i$ must correspond to poorly fitted observations.
For many models, the analytic calculation of the $CPO$ is not
available. However, \citet{Dey} showed that an harmonic mean approach
can be used to do a Monte Carlo approximation of the $CPO_i$ by using a MCMC sample
$\left\{\bftheta_1,\cdots,\bftheta_M\right\}$ from the posterior
distribution $\pi\left(\bftheta |\y\right)$. Therefore,
the $CPO_i$ approximation is given by
\begin{eqnarray*}
\widehat{CPO}_i = \left\{\frac{1}{M}\sum^{M}_{m=1}\frac{1}{f\left(y_i|\bftheta_m \right)}\right\}^{-1}.
\end{eqnarray*}
Since the $CPO_i$ is defined for each observation, the log-marginal pseudo
likelihood (LPML) given as
$$\textrm{LMPL}=\sum^{n}_{i=1}\log\left(\widehat{CPO}_i\right),$$
is used to summarise the $CPO_i$ information and the larger the value
of LMPL is, the better the fit of the model under consideration.

\bibliographystyle{spbasic}
\bibliography{biblio}

\end{document}